\documentclass[letterpaper, 10 pt, conference]{ieeeconf}
\IEEEoverridecommandlockouts
\usepackage{amsmath, amssymb, bm, graphicx, cite, subcaption, booktabs}
\usepackage{cleveref}
\crefformat{equation}{(#2#1#3)}
\Crefname{algocfline}{Algorithm}{Algorithms}
\Crefname{algocf}{line}{lines}
\Crefname{AlgoLine}{Line}{Lines}
\crefname{AlgoLine}{line}{lines}

\usepackage{xcolor, placeins, textcomp, amsfonts}

\title{\LARGE \bf
Scalable Coordination with Chance-Constrained Correlated Equilibria\\via Reduced-Rank Structure
}

\author{Jaehan Im, David Fridovich-Keil, Ufuk Topcu
\thanks{This work was supported by the NSF under grants 2336840 and 2211548, by NASA under ULI grants 80NSSC21M0071 and 80NSSC24M0070, and by ONR under grant N00014-22-1-2703.}% <-this % stops a space
\thanks{Jaehan Im and David Fridovich-Keil are with the Department of Aerospace Engineering, The University of Texas at Austin,
        Austin, TX 78712, USA
        {\tt\small jaehan.im@utexas.edu; dfk@utexas.edu}}%
\thanks{Ufuk Topcu is with the Oden Institute for Computational Engineering and Sciences, The University of Texas at Austin,
        Austin, TX 78712, USA
        {\tt\small utopcu@utexas.edu}}%
}

\begin{document}
\bstctlcite{BSTcontrol}
\newcommand{\actSet}{\mathcal X}
\newcommand{\sys}{\text{sys}}

\newcommand{\ones}{\bm 1}
\newcommand{\reals}{{\mbox{\bf R}}}
\newcommand{\integers}{{\mbox{\bf Z}}}
\newcommand{\symm}{{\mbox{\bf S}}}  % symmetric matrices
\newcommand{\lag}{\mathcal{L}}

\newcommand{\nullspace}{{\mathcal N}}
\newcommand{\range}{{\mathcal R}}
\newcommand{\Rank}{\mathop{\bf Rank}}
\newcommand{\Tr}{\mathop{\bf Tr}}
\newcommand{\diag}{\mathop{\bf diag}}
\newcommand{\card}{\mathop{\bf card}}
\newcommand{\rank}{\mathop{\bf rank}}
\newcommand{\conv}{\mathop{\bf conv}}
\newcommand{\prox}{\bm{prox}}

\newcommand{\Expect}{\mathop{\bf E{}}}
\newcommand{\Prob}{\mathop{\bf Prob}}
\newcommand{\Co}{{\mathop {\bf Co}}} % convex hull
\newcommand{\dist}{\mathop{\bf dist{}}}
\newcommand{\argmin}{\mathop{\rm argmin}}
\newcommand{\argmax}{\mathop{\rm argmax}}
\newcommand{\epi}{\mathop{\bf epi}} % epigraph
\newcommand{\Vol}{\mathop{\bf vol}}
\newcommand{\dom}{\mathop{\bf dom}} % domain
\newcommand{\intr}{\mathop{\bf int}}
\newcommand{\sign}{\mathop{\bf sign}}
\newcommand{\norm}[1]{\left\lVert#1\right\rVert}
\newcommand{\mnorm}[1]{{\left\vert\kern-0.25ex\left\vert\kern-0.25ex\left\vert #1 
    \right\vert\kern-0.25ex\right\vert\kern-0.25ex\right\vert}}

\newtheorem{definition}{Definition} 
\newtheorem{theorem}{Theorem}
\newtheorem{lemma}{Lemma}
\newtheorem{corollary}{Corollary}
\newtheorem{remark}{Remark}
\newtheorem{proposition}{Proposition}
\newtheorem{assumption}{Assumption}
\newtheorem{example}{Example}

\newcommand{\cf}{{\it cf.}}
\newcommand{\eg}{{\it e.g.}}
\newcommand{\ie}{{\it i.e.}}
\newcommand{\etc}{{\it etc.}}

\newcommand{\ba}[2][]{\todo[color=orange!40,size=\footnotesize,#1]{[BA] #2}}

\newcommand{\fix}[1]{\textcolor{red}{#1}}

\newcommand{\bigO}{\mathcal{O}}

\newcommand{\intSet}{\mathbb{Z}}
\newcommand{\realSet}{\mathbb{R}}
\newcommand{\natSet}{\mathbb{N}}
\newcommand{\zeroSet}{\bm{0}}
\newcommand{\state}{\bm{x}}
\newcommand{\cmdh}{\bar{\bm{u}}}
\newcommand{\cmda}{\mathring{\bm{u}}}
\newcommand{\cmd}{\bm{u}}
\newcommand{\costh}{J_h}
\newcommand{\costa}{J_a}
\newcommand{\observ}{\bm{z}}

\newcommand{\circnum}[1]{%
  \raisebox{.5pt}{\textcircled{\raisebox{-.9pt}{#1}}}%
}

\maketitle
\thispagestyle{empty}
\pagestyle{empty}

\begin{abstract}
Chance-constrained correlated equilibrium enables coordination of noncooperative agents under cost uncertainty through probabilistic incentive-compatibility guarantees. However, computing such equilibria becomes intractable in large-scale systems due to the exponential growth of the joint action space. We develop an approximation method for computing chance-constrained correlated equilibria by showing that these equilibria admit a representation as convex combinations of a finite set of chance-constrained pure Nash equilibria, enabling tractable computation without solving the full correlated equilibrium program. Numerical experiments on large-scale multi-airline coordination scenarios demonstrate substantial reductions in computation time while achieving lower system delay costs compared to current operational practice. Under cost uncertainty, the proposed method consistently achieves lower deviation rate compared to the full formulation while achieving comparable coordination performance.
\end{abstract}

% \begin{keywords}
% Game theory, Optimization algorithms, Large-scale systems
% \end{keywords}

\section{Introduction}

Large-scale multi-agent systems often consist of self-interested agents 
whose decisions collectively determine system-level outcomes. 
Designing coordination mechanisms that improve system performance 
without centralized enforcement remains a fundamental challenge. 
Correlated equilibrium~\cite{aumann, aumann_2} provides 
a principled framework for such coordination by enabling a coordinator 
to recommend actions that agents have no incentive to deviate from. 

In many practical settings, however, agents' cost structures are uncertain due to modeling errors, operational variability, or private information \cite{ce_app_2, ce_review, j_vq, j_ccce}. This undermines the incentive-compatibility guarantees of nominal correlated equilibrium, as agents may benefit from deviating when realized costs differ from their nominal estimates. To address this issue, recent work has introduced the \emph{chance-constrained correlated equilibrium} (CC-CE) by requiring incentive compatibility to hold with a prescribed confidence level, providing robustness against cost uncertainty \cite{j_ccce, j_vq}.

Despite its modeling advantages, computing CC-CE remains challenging 
in large-scale systems. The joint action space grows exponentially 
with the number of agents and their action sets, rendering direct 
computation intractable. \textit{Reduced-rank correlated equilibrium} (RRCE) has been proposed to mitigate this issue by representing equilibria as convex combinations of a finite set of pure Nash equilibria \cite{rrce}. 
However, extending this approach to the chance-constrained setting 
is nontrivial, as the presence of probabilistic incentive constraints 
alters the structure of the feasible set.

We show that CC-CE admits a reduced-rank representation that enables scalable computation. Specifically, we approximate the CC-CE feasible set using convex combinations of a finite set of chance-constrained pure Nash equilibria, thereby avoiding explicit enumeration of the full joint action space. We further characterize the class of equilibria required for this representation and develop an algorithm that scales to large problem instances.

\begin{figure}[t!]
    \centering
    \includegraphics[width=0.95\linewidth]{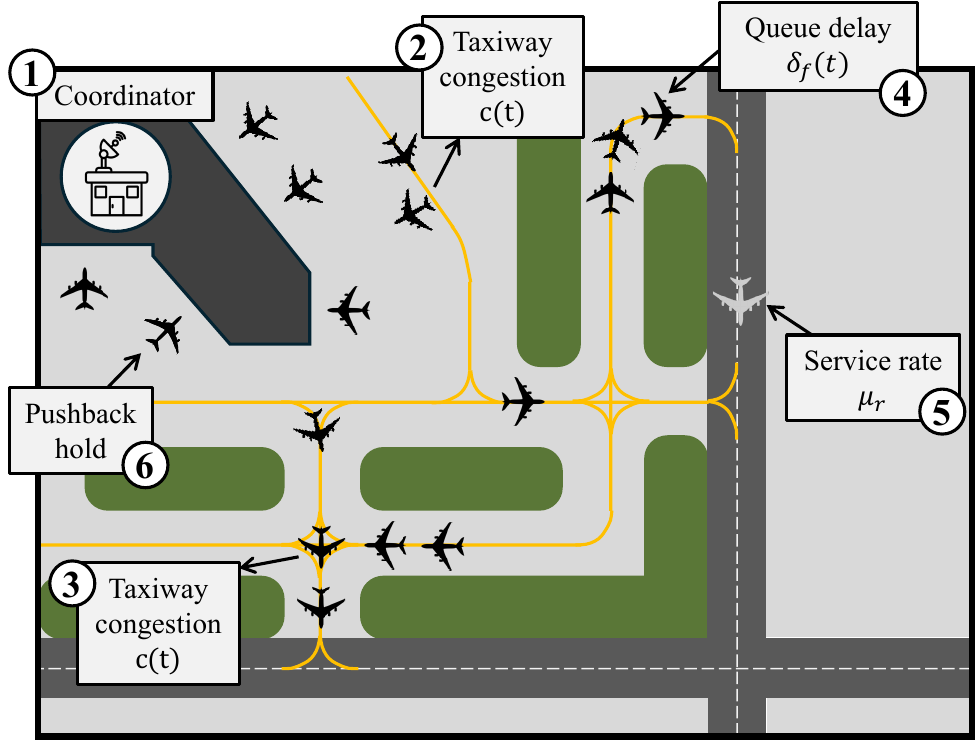}
    \caption{Illustrative example of a coordination scenario inspired by airport virtual queueing systems.
    A central coordinator recommends which aircraft to release \circnum{1} but does not enforce aircraft-level decisions directly.
    Released aircraft contribute to shared congestion $c(t)$ (\circnum{2}--\circnum{3}) and to queueing delay $\delta_f(t)$ \circnum{4}, while departures are governed by runway service rates $\mu_r$ \circnum{5}.
    Airlines retain autonomy over whether to follow the recommendations \circnum{6}, creating a noncooperative coordination problem.}
    \label{fig:concept}
\end{figure}

We demonstrate the effectiveness of the proposed approach on coordination problems motivated by virtual queueing systems, where large-scale coordination under uncertainty is required. Numerical results show that the reduced-rank method achieves substantial computational gains while maintaining comparable coordination performance.

The contributions of this paper are threefold. 
First, we establish a reduced-rank representation of chance-constrained correlated equilibria via convex combinations of chance-constrained pure Nash equilibria, enabling scalable computation. 
Second, we develop a scalable algorithm that avoids enumeration of the full joint action space. 
Third, we validate the approach on large-scale coordination instances inspired by virtual queueing systems.

\section{Related Work}

The correlated equilibrium concept~\cite{aumann, aumann_2} has been widely studied as a mechanism for coordinating noncooperative agents, as it allows a coordinator to recommend actions while preserving each agent's autonomy~\cite{rrce, ce_effective_traffic, ce_app_2, ce_error_2_exp, j_vq}. However, correlated equilibrium recommendations can be sensitive to uncertainty in the underlying payoff structure, such as estimation errors or stochastic perturbations~\cite{ce_app_2, ce_review}.

Existing approaches handling uncertainty in correlated equilibria 
include Bayes correlated equilibria~\cite{ce1}, robust and 
distributionally robust formulations~\cite{ce2,ce4,ce8,ce9}, and bounded-rationality models such as quantal 
response equilibria~\cite{qre1,qre2}. While these approaches 
offer complementary perspectives, none provides explicit probabilistic 
guarantees on incentive compatibility under stochastic cost perturbations. Recent work has introduced chance-constrained formulations of correlated equilibria to address this gap~\cite{j_ccce, j_vq}.

Computing correlated equilibria is known to be intractable as the 
joint action space grows exponentially with the number of agents~\cite{rrce}. 
Prior work has addressed this through learning dynamics~\cite{tract_1, 
tract_2} and approximation methods~\cite{tract_3}, though 
these do not allow targeted equilibrium selection. The reduced-rank 
correlated equilibrium framework~\cite{rrce} approximates the correlated  equilibrium set via convex combinations of Nash equilibria, enabling tractable computation while preserving coordination quality. However, whether such reduced-rank representations extend to uncertainty-aware equilibrium formulations remains unclear.

\section{Preliminaries}

We introduce the correlated equilibrium concept and its use 
as a coordination mechanism. These form the foundation for 
the chance-constrained extension developed in \Cref{sec:ccce}.

Consider a finite strategic game with agent set 
$\mathcal{N} = \{1, \ldots, n\}$. Each agent $i \in \mathcal{N}$ 
selects an action from a finite set $\mathcal{X}_i$, and the joint 
action space is $\mathcal{X} := \prod_{i \in \mathcal{N}} \mathcal{X}_i$. 
Let $J_i: \mathcal{X} \rightarrow \mathbb{R}$ denote the cost of agent $i$. 
We define the deviation cost for agent $i$ as
\begin{equation}
    \Delta J_i(x_i, x_i', x_{-i}) := J_i(x_i, x_{-i}) - J_i(x_i', x_{-i}),
\end{equation}
where $x_{-i} \in \prod_{j \neq i} \mathcal{X}_j$ denotes the actions 
of all agents except $i$. This quantity captures the cost change that 
agent $i$ would experience by deviating from a recommended action $x_i$ 
to an alternative $x_i'$, given that other agents play $x_{-i}$.

\begin{definition}[Correlated equilibrium~\cite{aumann, aumann_2}]
A distribution $z \in \Delta(\mathcal{X})$ is a correlated equilibrium (CE) if
\begin{equation}
    \mathbb{E}_{x_{-i} \sim z(\cdot|x_i)}
    [\Delta J_i(x_i, x_i', x_{-i})] \leq 0,
    \quad \forall i,\ \forall x_i,\ \forall x_i' \neq x_i.
    \label{eq:ce}
\end{equation}
\end{definition}

In a CE, a coordinator samples a joint action from $z$ and privately 
recommends each agent its corresponding component. Since each CE 
constraint in~\Cref{eq:ce} is linear in $z$ and $\Delta(\mathcal{X})$ 
is convex, the CE set forms a convex polytope. Intuitively, once an 
agent receives a recommendation, following it is individually rational 
as no unilateral deviation reduces its expected cost.

\subsection{Coordination via correlated equilibrium}

The goal of coordination is to select, among all feasible CEs, 
one that leads to a desirable system-level outcome. Let 
$J_\mathrm{sys}: \Delta(\mathcal{X}) \rightarrow \mathbb{R}$ denote 
a system-level cost. The coordination task reduces to the following 
equilibrium selection problem:
\begin{equation}
    \min_{z \in \Delta(\mathcal{X})}\ J_\mathrm{sys}(z) \quad 
    \text{s.t. } z \text{ satisfies~\Cref{eq:ce}}.
    \label{eq:ce_program}
\end{equation}
Since both the objective and constraints are linear in $z$, 
\Cref{eq:ce_program} is a linear program. This formulation 
enables the coordinator to steer self-interested agents toward 
system-efficient outcomes while preserving their autonomy.
\medskip
\begin{example}[Coordination via correlated equilibrium]
Consider a scenario where two drivers approach an intersection and choose either 
\emph{Go} (G) or \emph{Stop} (S).
Their costs are given by
% \small{
\begin{equation}
\begin{array}{c|cc}
 & G & S \\
\hline
G & (5,5) & (-1,1) \\
S & (1,-1) & (1,1)
\end{array}
\end{equation}
% }
\normalsize
where rows correspond to Driver~1 and columns to Driver~2.
If both go, they incur a large penalty (5,5).
If both stop, both incur a mild delay (1,1).
If exactly one goes, the driver who goes benefits ($-1$) 
while the other incurs a mild delay ($1$).
Consider a correlation device that selects $(G,S)$ and $(S,G)$ with probability $\frac{1}{2}$.
If a driver is recommended to go, following the recommendation yields cost $-1$,
whereas deviating leads to $(S,S)$ with cost $1$.
If recommended to stop, following yields cost $1$,
whereas deviating leads to $(G,G)$ with cost $5$.
Thus, conditioned on the received signal, deviation strictly increases cost.
The correlation device therefore avoids collision or inefficient mutual stopping.
\end{example}

\section{Chance-Constrained Correlated Equilibria}
\label{sec:ccce}

The correlated equilibrium formulation in~\Cref{eq:ce} assumes exact knowledge of 
agents' cost functions. In practice, the coordinator may 
not have accurate knowledge of agents' costs. We build on the chance-constrained correlated equilibrium 
(CC-CE) framework~\cite{j_vq, j_ccce}, which provides 
robust coordination guarantees under such uncertainty, 
and develop a scalable computation method for this setting.

\subsection{Deviation uncertainty model}

We model cost uncertainty as an additive perturbation on the 
deviation cost. For any deviation comparison involving agent $i$, 
the true deviation cost satisfies
\begin{equation}
    \Delta J_i(x_i, x_i', x_{-i}) = 
    \Delta \bar{J}_i(x_i, x_i', x_{-i}) + \eta_i,
    \label{eq:uncertainty}
\end{equation}
where $\Delta \bar{J}_i$ is the coordinator's nominal estimate and 
$\eta_i \sim \nu_i$ is an agent-level uncertainty term drawn from 
distribution $\nu_i$ with cumulative distribution function 
$\Phi_{\nu_i}: \mathbb{R} \rightarrow [0,1]$. The disturbance 
$\eta_i$ is assumed common across all deviation comparisons for 
agent $i$ within a given instance, capturing coherent modeling 
errors in the coordinator's cost estimate. Importantly, $\eta_i$ 
is not resampled for each action profile; rather, it represents 
a single agent-level realization that shifts all deviation margins 
for agent $i$ in a coherent way. Under this model, even when the 
nominal deviation margin is negative, uncertainty may cause the 
realized margin to become positive, leading agents to deviate 
from the recommendation.

\subsection{Chance-constrained correlated equilibrium}

To ensure that incentive compatibility holds despite cost 
uncertainty, we require the deviation condition to hold with 
a prescribed confidence level.

\begin{definition}[Chance-constrained CE~\cite{j_vq}]
\label{def:ccce}
For confidence level $\alpha \in (0,1)$, a distribution 
$z \in \Delta(\mathcal{X})$ is a chance-constrained correlated 
equilibrium (CC-CE) if
\begin{equation}
    \mathbb{P}_{\nu_i}\!\left(
    \mathbb{E}_{x_{-i} \sim z(\cdot|x_i)}
    [\Delta \bar{J}_i(x_i, x_i', x_{-i})] + \eta_i \leq 0
    \right) \geq \alpha,
    \label{eq:ccce}
\end{equation}
for all $i$, all $x_i$, and all $x_i' \neq x_i$.
\end{definition}

This condition guarantees that deviations remain unprofitable 
with probability at least $\alpha$. The chance 
constraint~\eqref{eq:ccce} admits a deterministic equivalent. 
Grouping all deviation constraints into an index 
$c = (i, x_i, x_i')$ and letting $i(c)$ denote the associated 
agent, \eqref{eq:ccce} is equivalent to
\begin{equation}
    \mathbb{E}_{x_{-i} \sim z(\cdot|x_i)}
    [\Delta \bar{J}_i(x_i, x_i', x_{-i})]
    + \Phi_{\nu_{i(c)}}^{-1}(\alpha) \leq 0, \quad \forall c,
    \label{eq:ccce_det}
\end{equation}
where $\Phi_{\nu_{i(c)}}^{-1}(\alpha)$ is the $\alpha$-quantile of 
the uncertainty distribution for agent $i(c)$. Since 
\eqref{eq:ccce_det} is affine in $z$, the CC-CE feasible set 
is a convex polytope, preserving the linear structure of the 
classical CE program~\cite{j_vq}. The confidence level $\alpha$ 
explicitly quantifies the tradeoff between robustness and 
feasibility: larger $\alpha$ yields stronger incentive guarantees 
but shrinks the feasible set by tightening each constraint by 
$\Phi_{\nu_{i(c)}}^{-1}(\alpha)$.

The coordination problem~\eqref{eq:ce_program} then extends 
naturally to the uncertainty-aware setting by replacing the 
CE constraints with~\eqref{eq:ccce_det}:
\begin{equation}
    \min_{z \in \Delta(\mathcal{X})}\ J_\mathrm{sys}(z) \quad 
    \text{s.t. } z \text{ satisfies~\eqref{eq:ccce_det}},
    \label{eq:ccce_program}
\end{equation}
which remains a linear program and provides the optimal 
incentive-compatible coordination policy with confidence $\alpha$.

\subsection{Scalable computation via reduced-rank structure}

Direct computation of CC-CE via~\Cref{eq:ccce_program} is 
intractable for large-scale problems, as the joint action space 
grows exponentially with the number of agents: a game with 
$n$ agents each having $m$ actions yields $\mathcal{O}(m^n)$ 
joint actions~\cite{rrce}. 

Reduced-rank correlated equilibrium 
(RRCE)\footnote{
The term \emph{reduced-rank} reflects that the joint distribution 
is restricted to a low-dimensional subspace spanned by a finite 
set of equilibria, rather than the full joint action space~\cite{rrce}.
} has been proposed to mitigate this issue 
in the nominal setting by representing equilibria as convex 
combinations of multiple Nash equilibria, each requiring 
$\mathcal{O}(mn)$ linear equations to compute via an interior 
point method~\cite{rrce}, an exponential reduction compared 
to the $\mathcal{O}(m^n)$ equations required for the full CE 
program. However, extending this approach to the 
chance-constrained setting is nontrivial, as the probabilistic 
incentive constraints alter the structure of the feasible set.

To address this challenge, we develop a reduced-rank formulation 
for CC-CE based on convex combinations of chance-constrained 
pure Nash equilibria, enabling tractable computation without 
enumerating the full joint action space.

To construct a reduced-rank representation, we begin by introducing a building block analogous to mixed Nash equilibria in the nominal setting.

\begin{definition}[Chance-constrained pure Nash equilibrium]
\label{def:ccpne}
A pure strategy profile $x \in \mathcal{X}$ is a 
chance-constrained pure Nash equilibrium (CC-PNE) with 
confidence level $\alpha$ if
\begin{equation}
    \mathbb{P}_{\nu_i}(\Delta J_i(x_i, x_i', x_{-i}) \leq 0) 
    \geq \alpha, \quad \forall i,\ \forall x_i' \neq x_i.
    \label{eq:ccpne}
\end{equation}
\end{definition}
\medskip

A CC-PNE can be identified by checking individual agents' deviation conditions without solving a joint distribution program, requiring only $\mathcal{O}(mn)$ equations per candidate~\cite{rrce}. 
The following results establish that convex combinations of CC-PNEs yield valid CC-CE distributions, justifying their use as building blocks for scalable coordination.

\begin{lemma}
\label{lem:ccpne_ccce}
Every CC-PNE induces a CC-CE distribution.
\end{lemma}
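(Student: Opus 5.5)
Let $x^\star \in \mathcal{X}$ be a CC-PNE. We take the induced distribution to be the Dirac measure $z = \delta_{x^\star}$, i.e., $z(x^\star)=1$ and $z(x)=0$ otherwise, and verify the deterministic equivalent \eqref{eq:ccce_det} for every index $c=(i,x_i,x_i')$. Two cases arise, depending on whether the recommendation $x_i$ coincides with $x_i^\star$.

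\emph{Case $x_i = x_i^\star$.} The conditional $z(\cdot\mid x_i^\star)$ is well defined and places all mass on $x_{-i}^\star$. The expected nominal deviation cost therefore reduces to $\Delta\bar J_i(x_i^\star,x_i',x_{-i}^\star)$, and \eqref{eq:ccce} becomes
\begin{equation*}
\mathbb{P}_{\nu_i}\!\left(\Delta\bar J_i(x_i^\star,x_i',x_{-i}^\star)+\eta_i\le 0\right)\ge\alpha .
\end{equation*}
By the uncertainty model \eqref{eq:uncertainty}, the random quantity inside the probability is exactly $\Delta J_i(x_i^\star,x_i',x_{-i}^\star)$. The inequality is therefore precisely the CC-PNE condition \eqref{eq:ccpne} for agent $i$ and the deviation $x_i'\neq x_i^\star$. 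Equivalently, in quantile form, it reads $\Delta\bar J_i(x_i^\star,x_i',x_{-i}^\star)+\Phi^{-1}_{\nu_i}(\alpha)\le 0$.

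\emph{Case $x_i\neq x_i^\star$.} Here $z(x_i)=0$, so the recommendation $x_i$ is never issued and the conditional is undefined. The main (though mild) obstacle is handling this convention carefully. We use the standard unnormalized form of the CE constraints, $\sum_{x_{-i}} z(x_i,x_{-i})\,\Delta\bar J_i(x_i,x_i',x_{-i}) + z(x_i)\,\Phi^{-1}_{\nu_i}(\alpha)\le 0$, obtained by multiplying \eqref{eq:ccce_det} by the marginal $z(x_i)$. This form is equivalent to \eqref{eq:ccce_det} whenever $z(x_i)>0$ and is the one under which the feasible set is a polytope. When $z(x_i)=0$, both terms vanish and the constraint holds trivially; equivalently, constraints for zero-probability recommendations are vacuous. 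We will state this convention explicitly so that the argument does not depend on the sign of $\Phi^{-1}_{\nu_i}(\alpha)$.

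Since every index $c$ falls into one of the two cases, $\delta_{x^\star}$ satisfies \eqref{eq:ccce_det} for all $c$ and is therefore a CC-CE. This sets up the subsequent convexity argument: the CC-CE feasible set is convex, so convex combinations of such Dirac measures also lie in it.
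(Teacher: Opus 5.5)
Your proof is correct and follows the paper's argument: put all mass on the CC-PNE $x^\star$, observe that the conditional collapses to $x_{-i}^\star$, and use $\Delta\bar J_i+\eta_i=\Delta J_i$ to recover exactly the CC-PNE condition. You go beyond the paper by also handling the off-path recommendations $x_i\neq x_i^\star$ through the unnormalized (affine) form of the constraints; the paper's proof leaves that case implicit, and it is also the form needed for its polytope claim.
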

\begin{proof}
Let $x \in \mathcal{X}$ be a CC-PNE with confidence level 
$\alpha$. The induced distribution concentrates all mass on 
$x$, i.e., $z(x) = 1$ and $z(\tilde{x}) = 0$ for all 
$\tilde{x} \neq x$. Then for any $x_i' \neq x_i$,
\begin{align}
    &\mathbb{P}_{\nu_i}\!\left(
    \mathbb{E}_{x_{-i} \sim z(\cdot|x_i)}
    [\Delta J_i(x_i, x_i', x_{-i})] + \eta_i \leq 0
    \right) \nonumber \\
    &\quad = \mathbb{P}_{\nu_i}
    (\Delta J_i(x_i, x_i', x_{-i}) \leq 0) 
    \geq \alpha,
    \label{eq:lemma_proof}
\end{align}
where the equality follows from $z$ concentrating all mass 
on $x$, and the inequality follows from 
\Cref{def:ccpne}. This satisfies \Cref{def:ccce}.
\end{proof}

\Cref{lem:ccpne_ccce} shows that each CC-PNE is itself a 
valid CC-CE. We now extend this to convex combinations.

\begin{theorem}
\label{thm:convex_hull}
The convex hull of CC-PNE distributions is a subset of the 
CC-CE feasible set.
\end{theorem}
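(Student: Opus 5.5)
The argument rests on two facts: the deterministic equivalent \eqref{eq:ccce_det} describes the CC-CE feasible set as an intersection of $\Delta(\mathcal{X})$ with half-spaces, and each CC-PNE lies in that intersection by \Cref{lem:ccpne_ccce}. A convex set contains the convex hull of any of its subsets, so the statement follows once convexity of the feasible set is made precise.

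The one subtle step is that \eqref{eq:ccce_det} is written with the conditional distribution $z(\cdot\mid x_i)$, which is not affine in $z$. I will therefore first rewrite each constraint in unconditioned form. For $c=(i,x_i,x_i')$, multiply through by the marginal $z_i(x_i)=\sum_{x_{-i}} z(x_i,x_{-i})$ to obtain
\begin{equation*}
g_c(z) := \sum_{x_{-i}} z(x_i,x_{-i})\bigl[\Delta \bar{J}_i(x_i,x_i',x_{-i}) + \Phi^{-1}_{\nu_{i(c)}}(\alpha)\bigr] \le 0 .
\end{equation*}
When $z_i(x_i)>0$, this is equivalent to the conditional form. When $z_i(x_i)=0$, the recommendation $x_i$ is never issued, and the constraint holds vacuously, with $g_c(z)=0$. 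This convention matches the classical CE polytope, and I will state it explicitly as the reading of \eqref{eq:ccce_det}. Each $g_c$ is linear in $z$, so the feasible set $\mathcal{C}=\{z\in\Delta(\mathcal{X}) : g_c(z)\le 0\ \forall c\}$ is an intersection of convex sets and hence convex.

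Next, let $x^1,\dots,x^K$ be CC-PNEs and let $\delta_{x^k}$ be the point mass on $x^k$. By \Cref{lem:ccpne_ccce} each $\delta_{x^k}$ satisfies \eqref{eq:ccce}, and therefore \eqref{eq:ccce_det}. For a point mass, $g_c(\delta_{x^k})$ is either $0$, when $x^k_i\neq x_i$, or equal to the conditional expression, when $x^k_i=x_i$. So $g_c(\delta_{x^k})\le 0$ for every $c$. Now take $z=\sum_k\lambda_k\delta_{x^k}$ with $\lambda_k\ge 0$ and $\sum_k\lambda_k=1$. Then $z\in\Delta(\mathcal{X})$, and by linearity
\begin{equation*}
g_c(z)=\sum_k\lambda_k g_c(\delta_{x^k})\le 0 \quad \text{for all } c,
\end{equation*}
so $z\in\mathcal{C}$.

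The only real obstacle is the nonlinearity introduced by conditioning. If the paper intends $z(\cdot\mid x_i)$ literally, a mixture could in principle change the conditional distribution seen by a given recommendation. The resolution is that the conditional expectation of a mixture is a weighted average of the conditional expectations of those components that issue recommendation $x_i$, with weights $\lambda_k \mathbf{1}[x^k_i=x_i]/z_i(x_i)$. A weighted average of nonpositive numbers is nonpositive, so the conditional form also holds directly, and the argument is independent of the multiply-through convention.
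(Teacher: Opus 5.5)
Your proof is correct and follows the same route as the paper: \Cref{lem:ccpne_ccce} places each point mass in the CC-CE set, and convexity of that set then contains every convex combination. The only difference is that you justify convexity by rewriting \eqref{eq:ccce_det} in the unnormalized linear form $g_c(z)\le 0$ (with constraints for never-recommended $x_i$ treated as vacuous), or equivalently by your weighted-average argument, whereas the paper simply asserts that the conditional form is affine in $z$; your version makes precise a step the paper glosses over.
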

\begin{proof}
By \Cref{lem:ccpne_ccce}, each CC-PNE induces a CC-CE 
distribution. Since the CC-CE feasible set is a convex 
polytope by~\Cref{eq:ccce_det}, any convex combination 
of CC-PNE distributions also satisfies the CC-CE constraints.
\end{proof}

\Cref{thm:convex_hull} enables the following reduced-rank 
coordination approach. Let $\{x^{(k)}\}_{k=1}^d$ denote a 
finite set of $d$ CC-PNEs. By \Cref{thm:convex_hull}, any 
mixture $z = \sum_k \lambda_k \mathbf{1}\{x = x^{(k)}\}$ 
with $\lambda \in \Delta_d$ is a valid CC-CE. The coordinator 
then solves the reduced counterpart of \Cref{eq:ccce_program}:
\begin{equation}
    \min_{\lambda \in \Delta_d}\ 
    \sum_{k=1}^{d} \lambda_k\, J_\mathrm{sys}(x^{(k)}).
    \label{eq:rr_program}
\end{equation}
This reduces the coordination problem to two tractable 
steps: finding $d$ CC-PNEs each requiring $\mathcal{O}(mn)$ 
linear equations, followed by solving a $d$-variable linear program~\eqref{eq:rr_program}. Since $d \cdot \mathcal{O}(mn) 
\ll \mathcal{O}(m^n)$ in practice~\cite{rrce}, this approach 
achieves an exponential reduction in computational cost 
compared to the full CC-CE program.

% \begin{figure}[hbt!]
%     \centering
%     \includegraphics[width=0.9\linewidth]{figure/0_Concept.pdf}
%     \caption{Illustrative example of a coordination scenario inspired by virtual queueing systems.
%     A central coordinator recommends which aircraft to release \circnum{1} but does not enforce aircraft-level decisions directly.
%     Released aircraft contribute to shared congestion $c(t)$ (\circnum{2}--\circnum{3}) and to queueing delay $\delta_f(t)$ \circnum{4}, while departures are governed by runway service rates $\mu_r$ \circnum{5}.
%     Airlines retain autonomy over whether to follow the recommendations \circnum{6}, creating a noncooperative coordination problem.}
%     \label{fig:concept}
% \end{figure}

\section{Numerical Experiment}
\label{sec:experiment}

We validate the proposed framework on a coordination scenario inspired by airport virtual queueing systems. 
\Cref{fig:concept} provides an illustrative example of such a setting.
In VQ, airlines independently decide which specific aircraft to 
release for departure, while a central coordinator oversees 
system-level surface performance. Since aircraft-level decisions 
are private to each airline, the coordinator cannot directly 
enforce which aircraft are released. The proposed chance-constrained correlated equilibrium (CC-CE) framework 
enables the coordinator to recommend specific aircraft selections 
to each airline in a way that each airline finds individually 
rational to follow, steering system-level outcomes without 
overriding airline autonomy.

\subsection{Scenario description}

Let $R \in \mathbb{Z}_{>0}$ denote the number of departure runways 
and $\mathcal{F}$ the set of aircraft eligible for pushback. Each 
aircraft $f \in \mathcal{F}$ is associated with a designated runway 
$r(f)$ and an aircraft class (small, medium, or heavy). Let 
$\mathcal{I}$ denote the set of active airlines, and let 
$\mathcal{F}_i \subseteq \mathcal{F}$ denote the eligible aircraft 
operated by airline $i$. Each airline selects a subset of its 
eligible aircraft for pushback:
\begin{equation}
    x_i \in \mathcal{X}_i := 2^{\mathcal{F}_i},
    \label{eq:action}
\end{equation}
and the joint action $x = (x_i)_{i \in \mathcal{I}} \in \mathcal{X}$ 
determines which aircraft enter the surface queues. Let 
$a_r(x) := \sum_{f \in \bigcup_i x_i} \mathbf{1}\{r(f) = r\}$ 
denote the number of aircraft pushed back to runway $r$. Given 
initial queue lengths $q \in \mathbb{Z}_{\geq 0}^R$ and runway 
service rates $\mu_r$, the resulting queue length at runway $r$ is
\begin{equation}
    \tilde{q}_r(x) = q_r + a_r(x),
    \label{eq:queue}
\end{equation}
which induces a runway queue delay of $\delta_f(x) = 
\tilde{q}_{r(f)}(x) / \mu_{r(f)} \cdot 4$ minutes for flight $f$.

The per-flight delay cost incorporates schedule lateness 
$\ell_f \geq 0$, a taxiway congestion penalty
\begin{equation}
    c(x) = \max\{0,\, |\textstyle\bigcup_i x_i| - 4\}^2,
    \label{eq:congestion}
\end{equation}
and a quadratic penalty for heavily delayed flights, consistent 
with departure-management delay models~\cite{quadratic}:
\begin{equation}
    d_f(x) = \begin{cases}
        (\ell_f)^2 + \delta_f(x) + c(x), & \ell_f > 10, \\
        \ell_f + \delta_f(x) + c(x), & \text{otherwise.}
    \end{cases}
    \label{eq:delay}
\end{equation}
Airline $i$ minimizes its class-weighted delay
\begin{equation}
    J_i(x) = \sum_{f \in \mathcal{F}_i} \omega_{c(f)}\, d_f(x),
    \label{eq:airline_cost}
\end{equation}
where $\omega_{c(f)} \in \mathbb{R}_{\geq 0}$ is a weight 
determined by aircraft class, with 
$[\omega_\mathrm{heavy}, \omega_\mathrm{med}, \omega_\mathrm{small}] 
= [1.2, 1.0, 0.75]$. The coordinator minimizes the unweighted 
total delay
\begin{equation}
    J_\mathrm{sys}(x) = \sum_{f \in \mathcal{F}} d_f(x).
    \label{eq:sys_cost}
\end{equation}
Airlines apply class-dependent weights reflecting fleet-specific 
valuations, whereas the coordinator evaluates unweighted aggregate 
delay, consistent with standard VQ objectives~\cite{VQ_asdex, 
VQ_burgain}. Cost uncertainty is modeled as in~\Cref{eq:uncertainty}, 
where $\eta_i \sim \mathcal{N}(0, \sigma_i^2)$ captures unobserved 
variability in airline $i$'s deviation incentives, with $\sigma_i$ 
denoting the standard deviation of the perturbation.

\subsection{Experiment setup}
We conduct Monte Carlo experiments with 100 independent trials 
per configuration. We fix the number of runways $R = 2$, service 
rates $[\mu_1, \mu_2] = [2, 2]$ aircraft per epoch, and initial 
queue lengths $q = [3, 4]$. In each trial, a busy-airport instance 
is generated by randomly drawing scheduled lateness 
$\ell_f \sim \mathcal{N}(0, 10^2)$ truncated to $\ell_f \geq 0$ 
for each eligible flight. The experimental variables are the 
eligible aircraft count $|\mathcal{F}| \in \{6, 7, \ldots, 14\}$, 
the confidence level $\alpha \in [0.30, 0.99]$, and the 
uncertainty level $\sigma_i$. The aircraft count $|\mathcal{F}| = 14$ 
corresponds to approximately 210 eligible pushbacks per hour,
assuming a 4-minute decision epoch,
exceeding the departure rate of one of the world's busiest 
airports~\cite{example_atlanta}. Each flight is randomly assigned 
to one of five airlines.

We compare four coordination mechanisms:
\begin{itemize}
    \item \texttt{FCFS}: Decentralized first-come-first-served, 
    reflecting current VQ practice~\cite{VQ_asdex, VQ_burgain}. 
    Airlines independently select aircraft by earliest scheduled 
    departure.
    \item \texttt{Full-CCCE}: Full CC-CE solver via 
    \Cref{eq:ccce_program}.
    \item \texttt{RR-Nominal}: Reduced-rank CE solver via 
    \Cref{eq:rr_program} under nominal costs ($\sigma = 0$), 
    without uncertainty awareness.
    \item \texttt{RR-CCCE}: Reduced-rank CC-CE solver via 
    \Cref{eq:rr_program} with uncertainty awareness.
\end{itemize}
\Cref{tab:baseline} summarizes the key differences among the 
compared coordination mechanisms.

\begin{table}[h]
\centering
\caption{Comparison of coordination mechanisms}
\label{tab:baseline}
\begin{tabular}{lcc}
\toprule
Method & CE formulation & Uncertainty-aware \\
\midrule
\texttt{FCFS} & -- & -- \\
\texttt{Full-CCCE} & Full formulation~\Cref{eq:ccce_program} & Yes \\
\texttt{RR-Nominal} & Reduced formulation~\Cref{eq:rr_program} & No \\
\texttt{RR-CCCE} & Reduced formulation~\Cref{eq:rr_program} & Yes \\
\bottomrule
\end{tabular}
\end{table}

We evaluate solver performance using the following three metrics.
\begin{itemize}
    \item \textbf{Delay cost}: The realized total delay 
    cost~\eqref{eq:sys_cost} computed under the joint pushback 
    decision. This metric captures both schedule delay and 
    congestion-induced delay.
    \item \textbf{Computation time}: Wall-clock time required 
    to compute the equilibrium at each trial. We use the 
    real-time constraint of 4 minutes per epoch as the 
    deployment threshold for online applicability.
    \item \textbf{Deviation rate}: The fraction of trials in 
    which at least one airline deviates from the recommended 
    action. A lower deviation rate indicates stronger incentive 
    compatibility under cost uncertainty.
\end{itemize}

For RRCE-based methods, CC-PNEs are identified by exhaustive enumeration of pure strategy profiles, which remains tractable for the instances considered in this work.
Experiments were implemented in Julia v1.11.3 using ParametricMCP~\cite{parammcp} and PATHSolver~\cite{PATH}.

\subsection{Result: Scalability}

We first evaluate computational scalability as the number of agents increases.
\texttt{Full-CCCE} computation time grows rapidly as the number 
of eligible aircraft increases, due to the exponential expansion 
of the joint action space, exceeding the 4-minute real-time 
threshold at 9 aircraft per epoch as shown in 
\Cref{fig:scalability}. \texttt{RR-CCCE} remains well below 
this threshold across all tested traffic levels up to 
$|\mathcal{F}| = 14$, corresponding to approximately 210 
eligible pushbacks per hour. This demonstrates that the 
reduced-rank approach enables real-time deployment at realistic coordination scenarios.

\begin{figure}[t!]
    \vspace{2mm}
    \centering
    \includegraphics[width=0.99\linewidth]{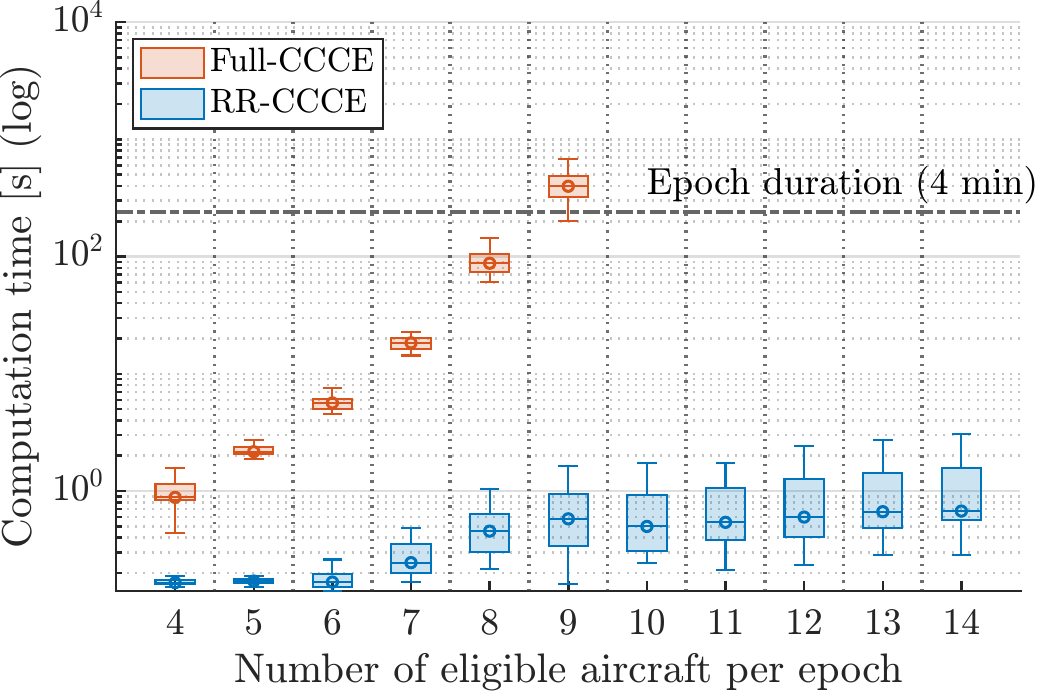}
    \caption{Scalability comparison between \texttt{Full-CCCE} and \texttt{RR-CCCE}.
    Wall-clock computation time (log scale) is shown as a function of the number of eligible aircraft per epoch.
    The horizontal dotted line indicates the epoch duration (4 minutes), representing the real-time constraint for online deployment.}
    \label{fig:scalability}
\end{figure}

\begin{figure}[t!]
    \centering
    \includegraphics[width=0.99\linewidth]{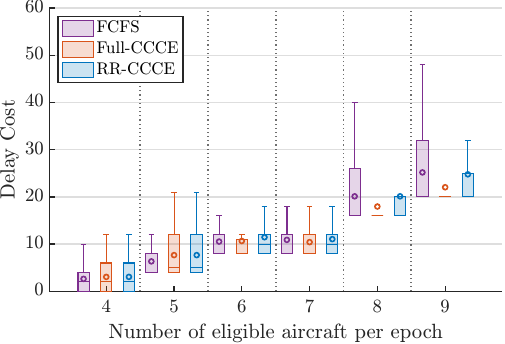}
    \caption{Comparison of realized delay cost across coordination mechanisms.
    \texttt{Full-CCCE} and \texttt{RR-CCCE} outperform \texttt{FCFS} as traffic increases.
    \texttt{RR-CCCE} exhibits a performance gap relative to \texttt{Full-CCCE} due to approximation. Average marked by $\circ$.}
    \label{fig:delayCost}
\end{figure}

\subsection{Result: Coordination effect}

We next examine the coordination performance under nominal costs.
The realized delay cost across coordination mechanisms under 
$\sigma = 0$ is shown in \Cref{fig:delayCost}. At low traffic 
levels, all methods perform similarly, though \texttt{Full-CCCE} 
and \texttt{RR-CCCE} show higher variance. As traffic increases, 
\texttt{FCFS} cost rises sharply, while CE-based methods yield 
lower costs. In the high-volume regime ($|\mathcal{F}| \in 
\{8, 9\}$), \texttt{Full-CCCE} achieves the lowest delay cost compared
to the others. \texttt{RR-CCCE} 
exhibits a modest performance gap relative to 
\texttt{Full-CCCE}, attributable to the reduced-rank 
approximation of the equilibrium set. These results confirm 
that CE-based coordination consistently improves system-level 
performance over the current operational baseline.

\begin{figure}[t]
    \vspace{2mm}
  \centering
    \begin{subfigure}{0.44\textwidth} \label{fig:eff_sigma}
        \includegraphics[width=\linewidth]{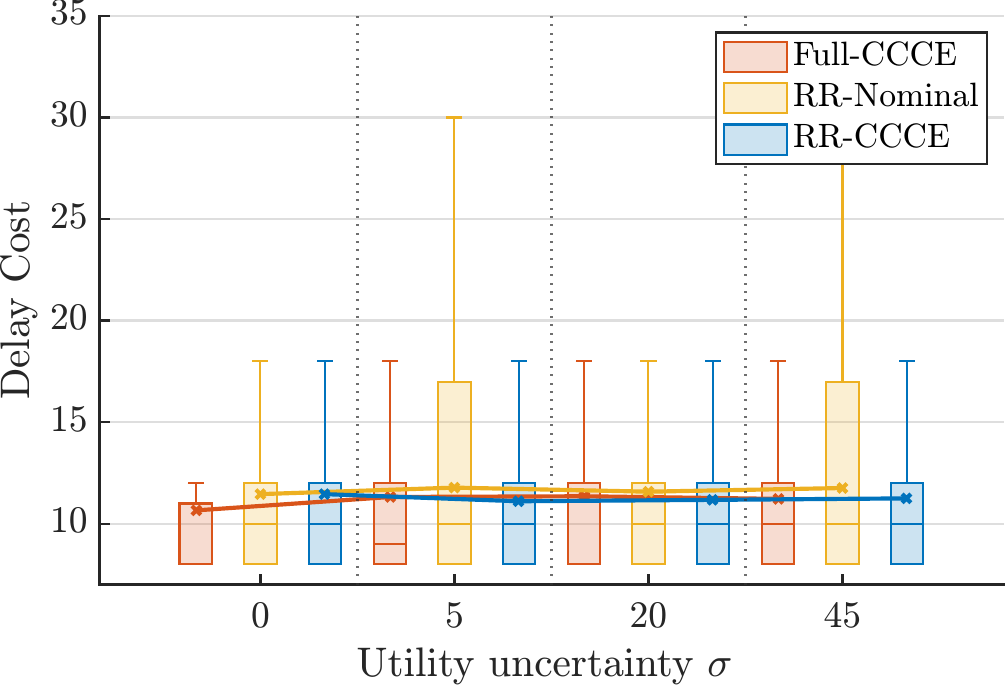}
        \caption{}
    \end{subfigure}
    \begin{subfigure}{0.44\textwidth} \label{fig:devfreq_sigma}
        \includegraphics[width=\linewidth]{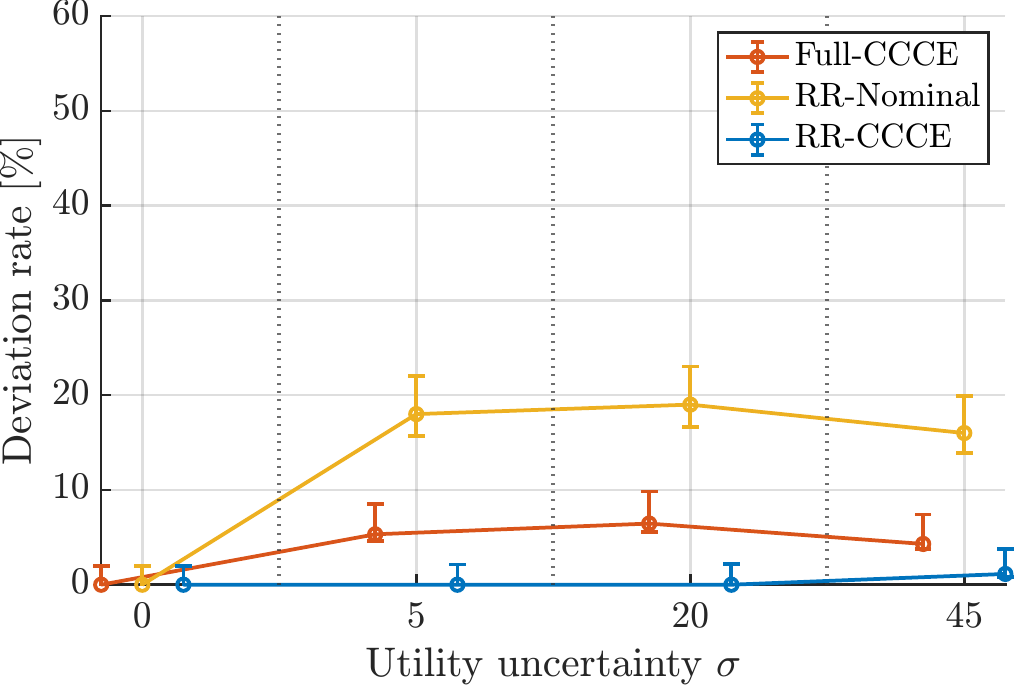}
        \caption{}
    \end{subfigure}
  \caption{Performance under cost uncertainty ($\alpha=90\%$, six airlines).
    (a) Delay cost distribution as a function of cost noise level $\sigma$. 
    All correlated equilibrium-based methods maintain comparable delay cost as uncertainty increases.
    (b) Deviation rate versus $\sigma$. 
    The uncertainty-aware formulation reduces deviation rates, with the reduced-rank approach maintaining consistently low deviation levels across all $\sigma$.
    Horizontal lines indicate mean values.}
  \label{fig:sigma_2stack}
\end{figure}

\subsection{Result: Robustness under uncertainty}

Finally, we investigate the impact of cost uncertainty.
Under varying cost uncertainty $\sigma$ at $\alpha = 90\%$, 
mean delay costs remain comparable across all CE-based methods 
as shown in \Cref{fig:sigma_2stack}(a), indicating that 
coordination performance is largely preserved under uncertainty. 
Regarding deviation frequency, uncertainty-aware methods 
demonstrate improved robustness as shown in 
\Cref{fig:sigma_2stack}(b). \texttt{RR-Nominal} exhibits 
consistently elevated deviation rates of 10--20\% across all 
$\sigma$ values, while \texttt{Full-CCCE} shows moderate 
deviation rates that increase with $\sigma$. \texttt{RR-CCCE} 
maintains consistently lower deviation rates across all tested 
uncertainty levels. Notably, as $\sigma$ increases, the delay 
cost gap between \texttt{Full-CCCE} and \texttt{RR-CCCE} 
narrows, suggesting that uncertainty-driven contraction of the 
full CC-CE feasible set reduces the advantage of the full 
formulation. These results indicate that \texttt{RR-CCCE} 
achieves maintains low deviation rates without sacrificing 
coordination performance.

\section{Discussion}
\label{sec:discussion}

The results show that correlated equilibrium-based coordination 
consistently reduces delay cost compared to the \texttt{FCFS} 
baseline across all tested traffic levels. This indicates that 
incentive-compatible recommendations can improve system-level 
performance without requiring centralized enforcement. 
\texttt{RR-CCCE} scales to realistic traffic levels within the 
real-time constraint, whereas \texttt{Full-CCCE} becomes 
computationally intractable beyond moderate problem sizes. 
Under nominal costs, \texttt{RR-CCCE} exhibits a modest 
performance gap relative to \texttt{Full-CCCE}, reflecting 
the restriction imposed by the reduced-rank approximation.

Under cost uncertainty, two trends are observed. First, the 
delay cost gap between \texttt{Full-CCCE} and \texttt{RR-CCCE} 
largely diminishes, while \texttt{RR-Nominal} shows increased 
variance, indicating the importance of explicitly incorporating 
uncertainty in the equilibrium constraints. Second, 
\texttt{RR-CCCE} maintains consistently low deviation rates 
across all tested uncertainty levels, while \texttt{Full-CCCE} 
exhibits higher deviation rates as uncertainty increases. 
This suggests that the restricted feasible set induced by the reduced-rank formulation may lead to an incidental robustness effect.

Overall, these results indicate that \texttt{RR-CCCE} achieves 
a favorable tradeoff between computational efficiency, 
coordination performance, and robustness under uncertainty, 
making it a practical approach for large-scale noncooperative coordination problems.

\section{Conclusion}

We developed a scalable method for computing chance-constrained correlated equilibria (CC-CE) by extending the reduced-rank correlated equilibrium framework to the chance-constrained setting.
By approximating the CC-CE feasible set via convex combinations of chance-constrained pure Nash equilibria, the coordination problem reduces from exponential in the number of agents to a small linear program. 
Numerical experiments on a collaborative virtual queue scenario confirmed scalability to realistic traffic levels and consistent performance improvements over current practice. 
Under cost uncertainty, the reduced-rank approach maintained lower deviation rates while achieving coordination performance comparable to the full formulation, indicating a potential robustness benefit of the reduced-rank approximation. 
Future work includes extending the framework to dynamic multi-epoch settings and developing computational approaches that account for the possible nonexistence of chance-constrained pure Nash equilibria.

\bibliographystyle{IEEEtran} 
\bibliography{reference}

\end{document}